\let\set\mathbb
\def\<#1>{\langle#1\rangle}
\newtheorem{prop}{Proposition}
\newtheorem{defn}{Definition}
\newenvironment{proof}{\par\noindent\textbf{Proof.}}{\par\smallskip}
 \def\applyR#1#2#3{$#1$,\kern.7em$#2$\kern.7em$\stackrel{\mathrm{R}}\longrightarrow$\kern.7em$#3$}
 \def\applyS#1#2#3{$#1$,\kern.7em$#2$\kern.7em$\stackrel{\mathrm{S}}\longrightarrow$\kern.7em$#3$}
 \def\applyU#1#2{$#1$\kern.7em$\stackrel{\mathrm{U}}\longrightarrow$\kern.7em$#2$}
\begin{document}

\title{Short Proofs for Some Symmetric Quantified Boolean Formulas}

\author[mk]{Manuel Kauers}
\ead{manuel.kauers@jku.at}

\author[ms]{Martina Seidl}
\ead{martina.seidl@jku.at}

\address[mk]{Institute for Algebra, J. Kepler University Linz, Austria}
\address[ms]{Institute for Formal Models and Verification, J. Kepler University Linz, Austria}

 \begin{abstract}
  We exploit symmetries to give short proofs for two prominent formula families 
  of QBF proof complexity. On the one hand, we employ symmetry 
  breakers. On the other hand, we enrich the (relatively weak) 
  QBF resolution calculus 
  Q-Res with the symmetry rule and obtain separations to powerful QBF calculi. 
 \end{abstract}

 \begin{keyword}
   Automated Theorem Proving \sep Proof Complexity \sep QBF
 \end{keyword}

 \maketitle

 \section{Introduction}\label{sec:0}

 A Quantified Boolean Formula (QBF) is a formula of the form $P.\phi$, where
 $\phi$ is a propositional formula, 
 say in the variables $x_1,\dots,x_n$, and $P$
 is a quantifier prefix $P=Q_1x_1Q_2x_2\cdots Q_nx_n$ with $Q_i\in\{\forall,\exists\}$.
 From QBF proof complexity, it is well-known that the shortest 
 proof of certain QBFs 
 may have
 exponential size in a resolution-based calculus~\cite{DBLP:books/daglib/0075409,DBLP:conf/stacs/BeyersdorffCJ15}. 
 We consider here two families of QBFs (cf.~Section~\ref{sec:1})
 which play a prominent role in QBF proof complexity for separating 
 various calculi. We make the 
 observation that short proofs can be obtained if we
 take into account the symmetries of the formulas. 
 In Section~\ref{sec:2}, we do so by
 using symmetry breakers. In Section~\ref{sec:3}, we enrich
 the oldest variant of the resolution calculus for QBF, Q-Res~\cite{DBLP:journals/iandc/BuningKF95},
 by a \emph{symmetry rule}, generalizing
 an idea reported in~\cite{DBLP:journals/acta/Krishnamurthy85,DBLP:journals/dam/Urquhart99} for SAT. In both cases, it turns out that the proof sizes
 for both families of formulas shrinks from exponential to linear. 
 As consequences, we obtain 
 separation results between Q-Res with the symmetry rule and 
 powerful proof systems like IR-calc~\cite{DBLP:conf/stacs/BeyersdorffCJ15}
and LQU$^+$~\cite{DBLP:conf/sat/BalabanovWJ14} (cf.~Section~\ref{sec:6}).

 Let us recall some basic facts and fix some notation. 
 We only consider QBFs $P.\phi$ where $\phi$ is in conjunctive normal form (CNF), i.e., $\phi$ is a
 conjunction of clauses, each clause being a disjunction of literals, each literal being a variable
 or a negated variable, i.e., if $x$ is a variable, $x$ and $\bar x$ are 
 literals. We also view clauses as sets of literals. 
 The prefix $P = Q_1x_1\ldots Q_nx_n$ imposes an order $<_P$ on its variables: 
 $x_i <_P x_j$ if $i < j$. The Q-Res calculus~\cite{DBLP:journals/iandc/BuningKF95}
 applies the following rules on a QBF $P.\phi$: 
 \begin{enumerate}
 \item[A] Any clause of $\phi$ can be derived.
 \item[R] From the already derived clauses $C\lor x$ and $C'\lor\bar x$ 
   with existentially quantified variable $x$ and
   $C,C'$ such that $C\cup C'$ is not a tautology,
   the clause $C\lor C'$ can be derived.
 \item[U] Let $C\lor l$ be an already derived clause where $l$ is 
 a universal literal, $\bar l \not\in C$ 
 and all existential literals $k \in C$ 
 are such that $k <_P l$. Then the clause $C$ can be derived. 
 \end{enumerate}

 In the following, we do not mention the application of the axiom rule~A
 explicitly. We write  \applyR{C_1}{C_2}{C} and \applyU{D_1}{D}
 for the application of R and~U. 
 A refutation of a QBF $P.\phi$ is the consecutive application of the 
resolution rule R and the universal reduction rule U until the empty 
clause is derived. Q-Res is sound and complete. 

Finally, let us recall the notion of (syntactic) symmetries for QBFs. A bijective map $\sigma$ from
 the set $\{x_1,\dots,x_n,\bar x_1,\dots,\bar x_n\}$ of literals to itself
 is called admissible for a prefix $P=Q_1x_1\dots Q_nx_n$
 if $\overline{\sigma(x)}\leftrightarrow\sigma(\bar x)$ for all $x\in\{x_1,\dots,x_n\}$
 and for all $i,j\in\{1,\dots,n\}$, we have $\sigma(x_i)\in\{x_j,\bar x_j\}$ only if
 $x_i$ and $x_j$ belong to the same quantifier block, i.e., $Q_{\min(i,j)}=\cdots=Q_{\max(i,j)}$.
 An admissible function $\sigma$ is called a symmetry for a QBF $P.\phi$ with $\phi$ in CNF
 if applying $\sigma$ to all literals in $\phi$ maps $\phi$ to itself (possibly up to
 reordering clauses and literals).

 \section{Formula Families}\label{sec:1}

 We consider the following two families of formulas.

 \def\kbkf{\mathrm{KBKF}}
 \begin{defn}[\cite{DBLP:books/daglib/0075409}]
  \label{def:hkb}
   For $n\in\set N$, the formula $\kbkf_n$ is defined by the prefix
   \[
   \exists x_1y_1\forall a_1
   \exists x_2y_2\forall a_2\dots
   \exists x_ny_n\forall a_n
   \exists z_1\dots z_n
   \]
   and the following clauses:
   \begin{itemize}
   \item $C_1=(\bar x_1\lor\bar y_1)$
   \item 
     for $j=1,\dots,n-1$:

     $C_{2j} =(x_j\lor\bar a_j\lor\bar x_{j+1}\lor\bar y_{j+1})$\\
     $C_{2j+1}=(y_j\lor a_j\lor\bar x_{j+1}\lor\bar y_{j+1})$.
   \item $C_{2n} = (x_n\lor \bar a_n\lor\bar z_1\lor \ldots \bar z_n)$,\\
    $C_{2n+1} = (y_n\lor a_n\lor\bar z_1\lor \ldots \bar z_n)$
   \item for $j=1,\dots,n$:

     $B_{2j-1}=(a_j\lor z_j)$ and $B_{2j}=(\bar a_j\lor z_j)$.
   \end{itemize}   
 \end{defn}

 For every $n\in\set N$, the formula $\kbkf_n$ is false, and it is 
 known~\cite{DBLP:books/daglib/0075409} that 
 any Q-Res refutation needs a number of steps which is at least exponential in~$n$.

 \def\parity{\mathrm{QUPARITY}}
 \def\qparity{\mathrm{QPARITY}}

 \begin{defn}[\cite{DBLP:conf/stacs/BeyersdorffCJ15}]
   For $n\in\set N$ with $n > 1$, the formula $\parity_n$ is defined by the prefix
   \[
   \exists x_1\dots x_n\forall a_1a_2\exists y_2\dots y_n
   \]
   and the following clauses:
   \begin{itemize}
   \item $A_2=(\bar x_1\lor\bar x_2\lor\bar y_2 \lor a_1 \lor a_2)$\\
     $B_2=(\bar x_1\lor x_2\lor y_2 \lor a_1 \lor a_2)$\\
     $C_2=(x_1\lor\bar x_2\lor y_2 \lor a_1 \lor a_2)$\\
     $D_2=(x_1\lor x_2\lor\bar y_2 \lor a_1 \lor a_2)$

   \item  for $j=3,\dots,n$:

     $A_j=(\bar y_{j-1}\lor\bar x_j\lor\bar y_j \lor a_1 \lor a_2)$\\
     $B_j=(\bar y_{j-1}\lor x_j\lor y_j \lor a_1 \lor a_2)$\\
     $C_j=(y_{j-1}\lor\bar x_j\lor y_j \lor a_1 \lor a_2)$\\
     $D_j=(y_{j-1}\lor x_j\lor\bar y_j \lor a_1 \lor a_2)$
   \item $E_1=(a_1 \lor a_2 \lor y_n)$ and $E_2=(\bar a_1 \lor \bar a_2 \lor\bar y_n)$

   \item for $i=2,\dots,n$, $A'_i, B'_i, C'_i, D'_i$ are obtained 
   from $A_i, B_i, C_i, D_i$ by replacing $a_1 \lor a_2$ by $\bar a_1 \lor \bar a_2$. 
   \end{itemize}
 \end{defn}

 $\parity_n$ is a variant of the $\qparity_n$ family~\cite{DBLP:conf/stacs/BeyersdorffCJ15}
 which encodes 
 $\exists x_1\dots x_n\forall z. z\not=x_1\oplus\cdots\oplus x_n$, 
 where $\oplus$ stands for exclusive or.
 Obviously all these formulas are false. Refuting $\qparity_n$
 needs an exponential number of steps in the calculus Q-Res, but
 not in the stronger calculus LQU$^+$. We use $\parity_n$ instead of $\qparity_n$
 because for this family, also LQU$^+$ needs exponentially many steps~\cite{DBLP:conf/stacs/BeyersdorffCJ15}.
 This will be used in Section~\ref{sec:6}.

 \section{Symmetry Breakers}\label{sec:2}

 Let $S$ be a set of symmetries for a QBF~$P.\phi$.
 A symmetry breaker is a certain Boolean formula $\psi$ such that when $P.\phi$ is true, so is $P.(\phi\land\psi)$.
 Writing $P=Q_1x_1\cdots Q_nx_n$, it was shown in~\cite{audemard2007efficient,DBLP:journals/corr/abs-1802-03993} that 
 \[
 \psi=\bigwedge_{\vbox{\hbox to0pt{\hss$\scriptstyle i=1$\hss}\kern-3pt\hbox to0pt{\hss$\scriptstyle Q_i=\exists$\hss}}}^n
      \ \ \bigwedge_{\sigma\in S}
      \biggl(\Bigl(\bigwedge_{j<i} (x_j\leftrightarrow\sigma(x_j))\Bigr)\rightarrow(x_i\rightarrow\sigma(x_i))\biggr)
 \]
 is a symmetry breaker.

 For the formulas $\kbkf_n$ (Def.~\ref{def:hkb}), 
 we have for every $i=1,\dots,n$
 the symmetry $\sigma_i=(x_i\ y_i)(\bar x_i\ \bar y_i)(a_i\ \bar a_i)$ which exchanges
 the variables $x_i, y_i$, the literals $\bar x_i,\bar y_i$, and the literals $a_i,\bar a_i$.
 Therefore,
 \[
   \psi_n = (\bar x_1\lor y_1)\land\cdots\land(\bar x_n\lor y_n)
 \]
 is a symmetry breaker for~$\kbkf_n$.

 \begin{prop}
   For $n\in\set N$, write $\kbkf_n$ as $P_n.\phi_n$, and let $\psi_n$ be the symmetry breaker from above.
   Then $P_n.(\phi_n\land\psi_n)$ has a refutation proof with no more than 
   $4n$ steps. 
 \end{prop}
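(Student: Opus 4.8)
The plan is to write the refutation down explicitly: it follows a single ``path'' through the prefix, where the symmetry-breaker clauses $\bar x_j\lor y_j$ (available as axioms) are used precisely to delete the literals $\bar y_j$ that would otherwise force the usual branching and hence the exponential blow-up of plain Q-Res.

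First I would resolve $C_1=(\bar x_1\lor\bar y_1)$ with $\bar x_1\lor y_1$ on the existential variable $y_1$ to obtain the unit clause $(\bar x_1)$. Then, setting $K_0:=(\bar x_1)$, I would maintain for $j=1,\dots,n-1$ the clause $K_j:=(\bar a_1\lor\cdots\lor\bar a_j\lor\bar x_{j+1})$: from $K_{j-1}$, resolve with $C_{2j}=(x_j\lor\bar a_j\lor\bar x_{j+1}\lor\bar y_{j+1})$ on $x_j$, and then resolve the result with $\bar x_{j+1}\lor y_{j+1}$ on $y_{j+1}$; that is two resolution steps per level, each with an existential pivot and a non-tautological resolvent. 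From $K_{n-1}$, a single resolution with $C_{2n}=(x_n\lor\bar a_n\lor\bar z_1\lor\cdots\lor\bar z_n)$ on $x_n$ yields $L:=(\bar a_1\lor\cdots\lor\bar a_n\lor\bar z_1\lor\cdots\lor\bar z_n)$.

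Next I would clear the $z$-literals: resolving $L$ in turn with $B_{2j}=(\bar a_j\lor z_j)$ on $z_j$, for $j=n,n-1,\dots,1$, strips off $\bar z_j$ one at a time and leaves the purely universal clause $M:=(\bar a_1\lor\cdots\lor\bar a_n)$. Since $M$ has no existential literals at all, rule~U applies to each $\bar a_j$, and $n$ applications reduce $M$ to the empty clause. The count is $1+2(n-1)+1+n+n=4n$ steps, namely one for $(\bar x_1)$, $2(n-1)$ for the $K_j$, one for $L$, $n$ for $M$, and $n$ universal reductions; axioms are not counted. For $n=1$ the $K_j$-stage is vacuous and $C_2$ is already of the form $C_{2n}$, so the same bookkeeping gives $4$ steps directly.

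The verification is routine except for the side conditions of rules R and~U. Non-tautology of each resolvent is immediate, since no variable ever occurs in both polarities in any clause we build. The only genuine content is that the final universal reductions are unobstructed, and that is exactly why we first remove the $\bar y_j$'s (using the symmetry-breaker clauses) and then the $\bar z_j$'s (using the $B$-clauses) before touching the innermost universal variables $a_j$: this is the place where the symmetry breaker is indispensable, since without the clauses $\bar x_j\lor y_j$ the literals $\bar y_j$ cannot be eliminated cheaply and the reductions of the $a_j$ stay blocked, which is precisely the phenomenon behind the exponential lower bound for Q-Res alone.
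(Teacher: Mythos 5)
Your refutation is correct and coincides, up to renaming of the intermediate clauses and the (immaterial) order in which the $\bar z_j$ literals are cleared with the $B_{2j}$ clauses, with the proof given in the paper, including the same $1+2(n-1)+1+n+n=4n$ step count. Nothing to add.
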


 The proof proceeds as follows.
 \begin{itemize}
 \item \applyR{C_1}{(\bar x_1\lor y_1)}{U_0:=\bar x_1}.

 \kern-\smallskipamount
 \item for $j=1,\dots,n-1$, do

   \applyR{C_{2j}}{U_{j-1}}{\tilde U_{j}:=(\bigvee_{i=1}^j\bar a_i{\lor}\bar x_{j+1}{\lor}\bar y_{j+1})}.

   \applyR{\tilde U_{j}}{(\bar x_{j+1}{\lor}y_{j+1})}{U_{j}:=(\bigvee_{i=1}^j\bar a_i{\lor}\bar x_{j+1})}.

   Then $U_{n-1}=(\bar a_1\lor\dots\lor\bar a_{n-1}\lor\bar x_n)$.
   
 \kern-\smallskipamount
 \item \applyR{C_{2n}}{U_{n-1}}{V_0:=(\bigvee_{i=1}^n\bar a_i\lor\bar z_1\lor\dots\lor\bar z_n)}.

 \kern-\smallskipamount
 \item for $j=1,\dots,n$, do

   \applyR{V_{j-1}}{B_{2j}}{V_j:=(\bigvee_{i=1}^n\bar a_i\lor\bigvee_{i=j+1}^n\bar z_i)}.
   
   Then $W_0:=V_n=(\bar a_1\lor\dots\lor\bar a_n)$.

 \kern-\smallskipamount
 \item for $j=1,\dots,n$, do

   \applyU{W_{j-1}}{W_j:=(\bar a_{j+1}\lor\dots\lor\bar a_n)}.
   
   $W_n$ is the empty clause. 
 \end{itemize}

 For the formulas $\parity_n$, the argument is similar. In this case,
 we have the symmetries $\sigma_1 = (x_1\ x_2)(\bar x_1\ \bar x_2)$ and
 \[
 \sigma_i=(x_i\ \bar x_i)(a_1\ \bar a_1)(a_2\ \bar a_2)(y_i\ \bar y_i)\cdots(y_n\ \bar y_n)
 \]
 for every $i=2,\dots,n$.
 There are some further symmetries which we will not need.
 The symmetries $\sigma_1,\dots,\sigma_n$ give rise to the
 symmetry breaker
 \[
   \psi_n=(\bar x_1 \lor x_2) \land \bar x_2\land\dots\land \bar x_n
 \]
 for $\parity_n$.

 \begin{prop}
   For $n\in\set N$ with $n > 1$, write $\parity_n$ as $P_n.\phi_n$, and let $\psi_n$ be the symmetry breaker from above.
   Then $P_n.(\phi_n\land\psi_n)$ has a refutation proof with no more than $2n+1$ steps. 
 \end{prop}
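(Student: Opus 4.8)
The plan is to mirror the refutation of $\kbkf_n$ above. The symmetry breaker $\psi_n$ supplies the unit clauses $\bar x_2,\dots,\bar x_n$ together with the binary clause $(\bar x_1\lor x_2)$, and these let us collapse the ``parity chain'' $D_2,\dots,D_n$ into a single clause over $a_1,a_2$; two universal reductions then yield the empty clause. The key intermediate clauses are $U_j:=(\bar y_j\lor a_1\lor a_2)$ for $j=2,\dots,n$. I would obtain $U_2$ from $D_2=(x_1\lor x_2\lor\bar y_2\lor a_1\lor a_2)$ by resolving on $x_1$ with $(\bar x_1\lor x_2)$ and then on $x_2$ with $\bar x_2$, and for $j=3,\dots,n$ obtain $U_j$ from $U_{j-1}$ and $D_j=(y_{j-1}\lor x_j\lor\bar y_j\lor a_1\lor a_2)$ by resolving on $y_{j-1}$ and then on $x_j$ with $\bar x_j$. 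None of the clauses $A_j,B_j,C_j$, none of the primed clauses, and not $E_2$ are needed.

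In order, the steps are: from $D_2$ and $(\bar x_1\lor x_2)$ derive $(x_2\lor\bar y_2\lor a_1\lor a_2)$, then with $\bar x_2$ derive $U_2$ (two steps); for $j=3,\dots,n$, from $U_{j-1}$ and $D_j$ derive $(x_j\lor\bar y_j\lor a_1\lor a_2)$ and then with $\bar x_j$ derive $U_j$ ($2(n-2)$ steps altogether), ending with $U_n=(\bar y_n\lor a_1\lor a_2)$; resolve $U_n$ with $E_1=(a_1\lor a_2\lor y_n)$ on the existential variable $y_n$ to get $(a_1\lor a_2)$ (one step); finally apply universal reduction twice, removing $a_2$ to obtain $(a_1)$ and then $a_1$ to obtain the empty clause (two steps). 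The total is $2+2(n-2)+1+2=2n+1$; for $n=2$ the middle loop is empty and the count is $2+0+1+2=5=2n+1$.

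What remains is to check the side conditions, all of which are routine. Every resolution is performed on an existential variable (some $x_i$ or some $y_i$), and in each case the union of the two remaining clauses contains no complementary pair --- the only literals occurring anywhere are among $x_i,\bar x_i,y_i,\bar y_i,a_1,a_2$, and we never combine a variable with its own negation. For the two reductions, the clause $(a_1\lor a_2)$ contains no existential literal, so the prefix condition of rule~U is vacuous and $a_2$ (and then $a_1$) may be removed regardless of their relative order in the prefix. The only slightly delicate point is making the count exactly $2n+1$: this is why the base case resolves $D_2$ directly against $(\bar x_1\lor x_2)$ rather than first spending a step to derive the unit $\bar x_1$. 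Since the harder ingredient --- that $\psi_n$ is indeed a symmetry breaker --- is already established, I do not expect any further obstacle.
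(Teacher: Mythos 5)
Your proposal is correct and follows essentially the same refutation as the paper: collapse the chain $D_2,\dots,D_n$ using $(\bar x_1\lor x_2)$ and the units $\bar x_j$ to obtain $U_j=(\bar y_j\lor a_1\lor a_2)$, resolve with $E_1$, and finish with two universal reductions, for $2n+1$ steps. The only differences are immaterial orderings (you resolve $D_j$ with $U_{j-1}$ before eliminating $x_j$, and reduce $a_2$ before $a_1$, while the paper does the reverse), so there is nothing to add.
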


 The proof proceeds as follows.
 \begin{itemize}
 \item \applyR{D_2}{(\bar x_1\lor x_2)}{U_1:=(x_2 \lor \bar y_2 \lor a_1 \lor a_2)}.

 \kern-\smallskipamount
 \item \applyR{U_1}{\bar x_2}{U_2 := (\bar y_2 \lor a_1 \lor a_2)}.

 \kern-\smallskipamount
 \item for $j=3,\dots,n$, do

   \applyR{D_j}{\bar x_j}{\tilde D_j:=(y_{j-1}\lor\bar y_j \lor a_1 \lor a_2)}.

 \kern-\smallskipamount
 \item for $j = 3,\dots,n$, do

   \applyR{U_{j-1}}{\tilde D_j}{U_j:=(\bar y_j \lor a_1 \lor a_2)}.

 \kern-\smallskipamount
 \item \applyR{U_n=(\bar y_n \lor a_1 \lor a_2)}{E_1}{(a_1\lor a_2)}.

 \kern-\smallskipamount
 \item \applyU{(a_1\lor a_2)}{\text{\applyU{a_2}{\text{empty clause}}}}.
 \end{itemize}
  
 \section{The Symmetry Rule}\label{sec:3}

 As an alternative to using symmetry breakers, we can enrich the calculus Q-Res 
 as introduced in Section~\ref{sec:0} to the calculus Q-Res+S by adding the
 following rule, which allows us to exploit symmetries of the input formula $P.\phi$
 within the proof. 
 \begin{enumerate}
 \item[S] From an already derived clause $C$ and a symmetry $\sigma$ of $P.\phi$,
 the clause $\sigma(C)$ can be derived.
 \end{enumerate}
 Several variants of this rule have been proposed 
 for SAT in~\cite{DBLP:journals/acta/Krishnamurthy85,DBLP:journals/dam/Urquhart99}, but to our knowledge it has not yet been
 considered in the context of QBF.
 However, it is easy to see that the rule also works for QBF.
 
 \begin{prop}
   Let $P.\phi$ be a QBF, and suppose that $C$ is a clause which can be derived from $\phi$
   using the rules S, R,~U.
   Then it can also be derived using only the rules R,~U. 
 \end{prop}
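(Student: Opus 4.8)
The plan is to proceed by induction on the number $k$ of applications of rule~S in the given derivation of~$C$. The base case $k=0$ is trivial. For $k\geq 1$, I would locate the \emph{first} application of~S, say it derives $\sigma(C')$ from an earlier clause $C'$ and a symmetry $\sigma$ of $P.\phi$. Because it is the first S-step, the entire sub-derivation producing $C'$ uses only the rules R and~U.

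The crucial claim is that applying $\sigma$ to every clause occurring in that sub-derivation turns it into a valid derivation of $\sigma(C')$ that still uses only R and~U. This is checked rule by rule. An axiom is a clause of $\phi$, and $\sigma$ maps it to another clause of $\phi$ since $\sigma$ is a symmetry. An application of~R resolving over an existential variable $x$ becomes an application of~R resolving over $\sigma(x)$: by admissibility the variable of $\sigma(x)$ lies in the same quantifier block as $x$ and is therefore again existential, and the non-tautology side condition is preserved because $\sigma$ is a bijection on literals with $\sigma(\bar u)=\overline{\sigma(u)}$. An application of~U that reduces a universal literal $l$ from $D\lor l$ becomes the reduction of $\sigma(l)$ from $\sigma(D)\lor\sigma(l)$: here $\sigma(l)$ is again universal, $\overline{\sigma(l)}\notin\sigma(D)$, and every existential literal of $\sigma(D)$ is of the form $\sigma(k)$ with $k\in D$ existential and $k<_P l$; since an existential and a universal variable always lie in different blocks and $\sigma$ maps each variable into its own block, $\sigma(k)$ still precedes $\sigma(l)$ in the prefix.

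Given this, I would modify the derivation by inserting, right after the original sub-derivation of $C'$, the $\sigma$-relabeled copy just described (which derives $\sigma(C')$ using only R and~U) and then deleting the original S-step. Every later step that referred to $\sigma(C')$ is still justified, and $C'$ itself remains available as well, so the result is a derivation of $C$ with only $k-1$ applications of~S; the claim follows by induction.

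The argument is the QBF analogue of the relabeling trick of Krishnamurthy and Urquhart for SAT, and the only place where the QBF-specific hypotheses really enter is the verification that the side conditions of R and~U survive the relabeling — exactly the point where the definition of admissibility (respecting quantifier blocks, commuting with negation) is used. I expect this verification, rather than the overall induction, to be the only delicate part. Note that each elimination step may roughly double the length of the derivation, so this transformation can blow the proof size up exponentially in~$k$; that is precisely why having rule~S available is worthwhile.
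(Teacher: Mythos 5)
Your argument is correct and is essentially the paper's own proof: you eliminate the earliest application of S by relabeling the preceding R,U-only subderivation with $\sigma$ (axioms go to axioms since $\sigma$ is a symmetry, R and U steps are preserved by admissibility), and your induction on the number of S-steps is just the inductive form of the paper's minimal-counterexample argument. Your extra verification of the side conditions of R and U, and the remark on the potential exponential blow-up, are accurate additions but do not change the route.
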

 \begin{proof}
   Suppose otherwise. Then there are clauses which can be derived with S, R,~U but not with R,~U alone.
   Let $C$ be such a clause, and consider a derivation of $C$ with a minimal number of applications
   of~S. The rule S is used at least once during the derivation. Consider its earliest application,
   suppose this application derives $\sigma(D)$ from the clause~$D$.
   If we can show that $\sigma(D)$ can also be derived using only R and~U, then we can eliminate
   this first application of S in the derivation of~$C$ and obtain a contradiction to the assumed
   minimality.

   To show that $\sigma(D)$ can be derived using only R and~U, observe first that $D$ was derived
   only using R and~U.
   For an admissible function $\sigma$, we have $\overline{\sigma(x)}\leftrightarrow\sigma(\bar x)$ for every variable~$x$.
   Therefore, if a clause $E$ can be derived by R from two clauses $E_1$ and~$E_2$, we can derive $\sigma(E)$
   by R from $\sigma(E_1)$ and $\sigma(E_2)$.
   Furthermore, an admissible function cannot permute literals across quantifier blocks, which implies
   that if $F$ can be derived by U from $F_1$, then $\sigma(F)$ can be derived by U from $\sigma(F_1)$.
   Finally, when $\sigma$ is a symmetry of $\phi$ and $G$ is a clause of~$\phi$, then also $\sigma(G)$
   is a clause of~$\phi$.
   By combining these three observations, it follows that applying $\sigma$ to all clauses appearing
   in the derivation of $D$ yields a derivation of~$\sigma(D)$.
   This completes the proof.
   \qed
 \end{proof}

 According to the previous proposition, with S we cannot derive 
 any clause that we cannot also
 derive without.
 Therefore, soundness of Q-Res+S follows from soundness of Q-Res.
 Next, we illustrate that Q-Res+S allows for shorter proofs than Q-Res.
 For the application of S, we write \applyS{C}{\sigma}{D}.

 \begin{prop}
   For every $n\in\set N$, the formula $\kbkf_n$ can be refuted by no more than $5n$ applications of S, R,~U.
 \end{prop}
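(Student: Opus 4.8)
The plan is to give the refutation explicitly, and --- somewhat against first instinct --- \emph{not} to imitate the $4n$-step symmetry-breaker refutation: there the literals $\bar y_j$ are deleted by resolving against the breaker clauses $(\bar x_j\lor y_j)$, but the intermediate clauses on which those resolutions act are fixed by the corresponding symmetry $\sigma_j$, so the rule S buys nothing at those points. Instead I would carry the literal $x_k$ \emph{positively} while descending through the prefix, so that each $\sigma_k$ acts non-trivially precisely where it is needed.

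The refutation runs in three phases. \textbf{(1)} Starting from $C_{2n}=(x_n\lor\bar a_n\lor\bar z_1\lor\cdots\lor\bar z_n)$, resolve away $\bar z_1,\dots,\bar z_n$ in turn against $B_2,B_4,\dots,B_{2n}$; each $B_{2i}=(\bar a_i\lor z_i)$ contributes $\bar a_i$ (and $B_{2n}$ merges into the $\bar a_n$ already present), so this produces $X_n:=(x_n\lor\bar a_1\lor\cdots\lor\bar a_n)$ after $n$ applications of R. \textbf{(2)} For $k=n,n-1,\dots,2$ I would turn $X_k:=(x_k\lor\bar a_1\lor\cdots\lor\bar a_k)$ into $X_{k-1}$ by a block of four steps: reduce $\bar a_k$ by U --- legal, since the only existential literal remaining, $x_k$, precedes $a_k$ in the prefix --- yielding $(x_k\lor\bar a_1\lor\cdots\lor\bar a_{k-1})$; apply S with $\sigma_k$ --- it is essential that $a_k$ has already been removed, so that $\sigma_k$ merely replaces $x_k$ by $y_k$ --- yielding $(y_k\lor\bar a_1\lor\cdots\lor\bar a_{k-1})$; apply R to resolve $C_{2(k-1)}=(x_{k-1}\lor\bar a_{k-1}\lor\bar x_k\lor\bar y_k)$ with $(x_k\lor\bar a_1\lor\cdots\lor\bar a_{k-1})$ on $x_k$, yielding $(x_{k-1}\lor\bar a_1\lor\cdots\lor\bar a_{k-1}\lor\bar y_k)$; and apply R once more to resolve this with $(y_k\lor\bar a_1\lor\cdots\lor\bar a_{k-1})$ on $y_k$, yielding $X_{k-1}$. \textbf{(3)} From $X_1=(x_1\lor\bar a_1)$: reduce $\bar a_1$ by U to obtain $(x_1)$, apply $\sigma_1$ to obtain $(y_1)$, resolve $(x_1)$ with $C_1=(\bar x_1\lor\bar y_1)$ to obtain $(\bar y_1)$, and resolve with $(y_1)$ to obtain the empty clause.

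Phase (1) costs $n$ steps, phase (2) costs $4(n-1)$ steps, and phase (3) costs $4$ steps, so the total is exactly $5n$. The remaining obligations are routine. Each $\sigma_k=(x_k\ y_k)(\bar x_k\ \bar y_k)(a_k\ \bar a_k)$ is a symmetry of $\kbkf_n$: it is admissible, and it swaps $C_{2k}\leftrightarrow C_{2k+1}$ and $B_{2k-1}\leftrightarrow B_{2k}$ while fixing every other clause up to reordering --- these are exactly the symmetries already used for the symmetry breaker. Every R-step is on an existential pivot, and, since in each resolution one premise has the pivot as its only positive literal, every clause appearing in the derivation carries at most one positive literal; hence no complementary pair can arise and no R-step produces a tautology. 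Finally, each U-reduction respects the prefix order, as noted above.

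The real difficulty is finding the right invariant, the family $X_k=(x_k\lor\bar a_1\lor\cdots\lor\bar a_k)$: a clause that can be synthesised from $C_{2n}$ and the $B$-clauses, that survives a single universal reduction into a shape on which $\sigma_k$ is productive, and that then re-attaches to the ``upper'' clauses $C_{2(k-1)}$ and ultimately to $C_1$. With this family identified, the construction and its step count are immediate; note that already for $n=1$, where only phases (1) and (3) occur, all $5n=5$ steps are used, so the bound is tight for this refutation.
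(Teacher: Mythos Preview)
Your refutation is correct and is essentially the paper's proof reflected through the global symmetry $\sigma_1\cdots\sigma_n$: the paper starts from $C_{2n+1}$, resolves against the $B_{2j-1}$ to build $W_j=(y_j\lor a_1\lor\cdots\lor a_j)$, and then runs the same four-step U/S/R/R block (using $C_{2j-1}$) down to the empty clause, whereas you start from $C_{2n}$, use the $B_{2j}$, and maintain $X_k=(x_k\lor\bar a_1\lor\cdots\lor\bar a_k)$ with $C_{2(k-1)}$. The phase structure, the key invariant, and the step count $n+4(n-1)+4=5n$ coincide exactly.
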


 We proceed as follows by using the symmetries of the form $\sigma_i=(x_i\ y_i)(\bar x_i\ \bar y_i)(a_i\ \bar a_i)$ for $i=1,\dots,n$.
 \begin{itemize}
 \item set $U_{n+1}=C_{2n+1}$.

 \kern-\smallskipamount
 \item for $j=n,\dots,1$, do

   \applyR{U_{j+1}}{B_{2j-1}}{U_j:=(y_n\lor\bigvee_{i=j}^na_i\lor\bigvee_{i=1}^{j-1}\bar z_i)}.

 \kern-\smallskipamount
 \item set $W_n:=U_1=(y_n\lor a_1\lor\dots\lor a_n)$.

 \kern-\smallskipamount
 \item for $j=n,\dots,2$, do

   \applyU{W_j}{V_j:=(y_j\lor\bigvee_{i=1}^{j-1} a_i)}.\\
   \applyS{V_j}{\sigma_j}{V_j':=(x_j\lor\bigvee_{i=1}^{j-1} a_i)}.\\
   \applyR{V_j'}{C_{2j-1}}{V_j'':=(y_{j-1}\lor\bar x_j\lor\bigvee_{i=1}^{j-1} a_i)}.\\
   \applyR{V_j''}{V_j}{W_{j-1}:=(y_{j-1}\lor\bigvee_{i=1}^{j-1} a_i)}.

 \kern-\smallskipamount
 \item \applyU{W_1=(y_1\lor a_1)}{V_1=y_1}.

 \kern-\smallskipamount
 \item \applyS{V_1}{\sigma_1}{V_1':=x_1}.

 \kern-\smallskipamount
 \item \applyR{V_1'}{C_1}{V_1'':=\bar y_1}.

 \kern-\smallskipamount
 \item \applyR{V_1''}{V_1}{\text{empty clause}}.
 \end{itemize} 
 
 \begin{prop}
   For every $n\in\set N$ with $n > 1$, the formula $\parity_n$ can be refuted by no more than $3n+2$ applications of S, R,~U.
 \end{prop}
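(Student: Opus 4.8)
The plan is to adapt the idea behind the Q-Res+S refutation of $\kbkf_n$: manufacture, by resolution, a clause over the $x$-variables only, and then collapse it to the empty clause with the symmetries. Compared to $\kbkf_n$ there is a twist: the clauses $A_j,B_j,C_j,D_j$ of $\parity_n$ are arranged so that any resolution among them that would remove some $x_j$ or $y_j$ merges a complementary pair and hence yields a tautology, and since $\sigma_i$ (for $i\ge2$) flips $a_1,a_2$, every resolution involving $\sigma_i(C)$ is tautological as long as $a_1$ or $a_2$ still occurs. So the symmetry steps have to be postponed until the working clause is pure in the $x$-variables, and the chain must be descended so that the pivot never has a complementary mate.

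Concretely, start from $E_1=(a_1\lor a_2\lor y_n)$ and resolve over $y_n$ with $A_n$, obtaining a clause containing $\bar y_{n-1}$, $\bar x_n$, $a_1$, $a_2$. Then, for $j=n-1,n-2,\dots,2$ in turn, resolve the current clause over $y_j$ with whichever of $A_j,B_j,C_j,D_j$ carries the literal complementary to the unique $y_j$-literal present. The current clause always has exactly one $y$-literal and pairwise non-complementary $x$-literals, and the clause picked contributes one new $x_j$-literal and one $y_{j-1}$-literal, so no tautology arises. After these $n-1$ resolutions the working clause is $M=(\ell_1\lor\dots\lor\ell_n\lor a_1\lor a_2)$ with $\ell_i\in\{x_i,\bar x_i\}$. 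Two applications of U now delete $a_2$ and then $a_1$ (admissible, as every remaining literal is an $x$-literal and hence precedes $a_1,a_2$ in the prefix), leaving the pure clause $K=(\ell_1\lor\dots\lor\ell_n)$.

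Finally I would peel the $x$-literals off one by one. The point is that on a clause over $x$-variables only, $\sigma_j$ ($j\ge2$) merely flips the $x_j$-literal, its action on $a_1,a_2,y_j,\dots,y_n$ being invisible. Hence from $K$, apply $\sigma_n$ and resolve the result against $K$ over $x_n$ to obtain $(\ell_1\lor\dots\lor\ell_{n-1})$; iterating with $\sigma_{n-1},\dots,\sigma_2$ yields the unit $(\ell_1)$ after $2(n-1)$ steps. Then apply $\rho=(x_1\ \bar x_1)(x_2\ \bar x_2)$---which is a symmetry of $\parity_n$, as it swaps $A_2\leftrightarrow D_2$ and $B_2\leftrightarrow C_2$ and fixes all other clauses (equivalently $\rho=(\sigma_1\sigma_2)^2$)---to $(\ell_1)$ to obtain $(\bar\ell_1)$, and resolve the two units to the empty clause. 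The total is $(n-1)+2+2(n-1)+2=3n+1$ steps, one below the asserted bound. The main obstacle is precisely the tautology issue flagged above; once the order of operations is fixed as described, checking legality of each step (for R: an existential pivot and a non-tautological union; for U: the quantifier-order condition) is routine.
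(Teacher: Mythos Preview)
Your argument is correct and follows essentially the same route as the paper: resolve $E_1$ down the chain of $A_j/B_j/C_j/D_j$ clauses (the paper simply fixes the choice to $D_j$ throughout, obtaining $\ell_i=x_i$) to reach a clause in the $x_i$ and $a_1,a_2$, strip $a_1,a_2$ by two applications of~U, and then use the $\sigma_j$ for $j=n,\dots,2$ to peel off the $x$-literals one by one. The only difference is the endgame on the unit $(\ell_1)$: the paper applies $\sigma_1$ to turn $x_1$ into $x_2$, then $\sigma_2$ to get $\bar x_2$, and resolves, whereas your use of the composite symmetry $\rho=(x_1\ \bar x_1)(x_2\ \bar x_2)=(\sigma_1\sigma_2)^2$ accomplishes the same in one fewer step.
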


 Recall from Section~\ref{sec:3} that $\parity_n$ has the symmetries $\sigma_1=(x_1\ x_2)(\bar x_1\ \bar x_2)$ and  $\sigma_i=(x_i\ \bar x_i)(a_1\ \bar a_1)(a_2\ \bar a_2)(y_i\ \bar y_i)\cdots(y_n\ \bar y_n)$ 
 for $i > 1$.
 \begin{itemize}
 \item \applyR{D_n}{E_1}{U_n:=(y_{n-1}\lor x_n\lor a_1 \lor a_2)}.

 \kern-\smallskipamount
 \item for $j=n-1,\dots,3$, do

   \applyR{D_j}{U_{j+1}}{U_j:=(y_{j-1}{\lor}\bigvee_{i=j}^n x_i{\lor}a_1{\lor}a_2)}.\kern-5pt\null

 \kern-\smallskipamount
 \item \applyR{D_2}{U_3}{U_2:=(\bigvee_{i=1}^n x_i\lor a_1 \lor a_2)}.

 \kern-\smallskipamount
 \item \applyU{U_2}{\text{\applyU{\bigvee_{i=1}^n x_i\lor a_1}{V_n:=\bigvee_{i=1}^n x_i}}}.

 \kern-\smallskipamount
 \item for $j=n,\dots,2$, do

   \applyS{V_j}{\sigma_j}{W_j:=(x_1\lor\dots\lor x_{j-1}\lor\bar x_j)}.\\
   \applyR{V_j}{W_j}{V_{j-1}:=(x_1\lor\dots\lor x_{j-1})}.

 \kern-\smallskipamount
 \item \applyS{V_1=x_1}{\sigma_1}{W_1:=x_2}.

 \kern-\smallskipamount
 \item \applyS{W_1}{\sigma_2}{W_2:=\bar x_2}.

 \kern-\smallskipamount
 \item \applyR{W_1}{W_2}{\text{empty clause}}.
 \end{itemize} 
 
 \section{Consequences}\label{sec:6}
 
 From recent results, it is known that plain Q-Res is rather weak
(for a fine-grained comparison of QBF 
 proof systems see~\cite{DBLP:conf/stacs/BeyersdorffCJ15}). Both, 
 the expansion-based proof system IR-calc and the CDCL-based proof system
 LQU$^+$ are strictly stronger than Q-Res. 
 The addition of the symmetry rule changes the situation. While 
 the $\parity_n$ formulas are hard for LQU$^+$ and the $\kbkf_n$ formulas 
 are hard for IR-calc, we have shown that both are easy for Q-Res+S. 
 Now one may ask
 if Q-Res+S is strictly stronger than IR-calc or LQU$^+$. The
 answer is clearly ``no''. For $\kbkf_n$, the application of the 
 symmetry rule can be hindered by introducing $n$ universally quantified 
 variables $b_i$ which are placed between $x_i$ and $y_i$ in the prefix.
 Further, each clause $C_{2j}$  changes to $C_{2j} \vee b_j$.   
 For this modified formula, LQU$^+$ can still find a short proof, 
 but Q-Res+S can only apply R and U, hence it falls back to Q-Res 
 which does not exhibit short proofs for $\kbkf_n$. 
 In a similar way, $\parity_n$ can be modified such that these 
 formulas remain simple for IR-calc, but become hard for Q-Res+S.

 \begin{prop}
 Q-Res+S and IR-calc are incomparable, and so are  Q-Res+S and LQU$^+$. 
 \end{prop}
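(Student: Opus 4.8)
The plan is to establish incomparability in each direction by exhibiting a family of formulas that is easy for one system and hard for the other. Recall that incomparability of two proof systems means neither $p$-simulates the other, so for each pair we need two families of formulas. Half of the work is already done: the earlier propositions show that $\kbkf_n$ and $\parity_n$ have linear-size Q-Res+S refutations, while it is known that $\kbkf_n$ is hard for IR-calc and $\parity_n$ is hard for LQU$^+$; this gives the ``easy for Q-Res+S, hard for the other'' direction in both cases. So the remaining task is to construct, for each system $X\in\{\mathrm{IR\text{-}calc},\mathrm{LQU}^+\}$, a family that is easy for $X$ but hard for Q-Res+S.

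First I would carry out the $\kbkf$-based construction for LQU$^+$. Following the sketch in Section~\ref{sec:6}, I would define $\kbkf'_n$ by inserting a fresh universal variable $b_i$ into the prefix between $x_iy_i$ and $a_i$, and replacing each clause $C_{2j}$ by $C_{2j}\lor b_j$ (leaving $C_{2j+1}$, the $C_{2n}$/$C_{2n+1}$ clauses, and the $B$-clauses as before, or with the analogous modification if needed to keep the formula false). I would then check two things. (1) The modified formula has no symmetry that swaps $x_i$ with $y_i$: the map $\sigma_i$ would now have to move $b_j$ somewhere, but $b_j$ occurs only in $C_{2j}\lor b_j$ and in no clause with the required matching structure, and since $b_j$ lies in its own singleton quantifier block, admissibility forbids mapping it to any other variable; hence the only admissible symmetries act trivially on the relevant literals, so Q-Res+S collapses to Q-Res on $\kbkf'_n$ and inherits the exponential lower bound (the $b_i$ are easily universally reduced away, so hardness for plain Q-Res is essentially unchanged — this requires a short argument that a Q-Res refutation of $\kbkf'_n$ yields one of $\kbkf_n$ of comparable size). (2) LQU$^+$ still refutes $\kbkf'_n$ in polynomial size: since LQU$^+$ allows universal literals to participate in resolution and permits long-distance steps, the extra $b_j$ literals do not obstruct the known short LQU$^+$-style refutation — I would cite the relevant construction from~\cite{DBLP:conf/sat/BalabanovWJ14} and note that the $b_j$ can be handled by reduction or carried harmlessly.

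Next I would do the symmetric construction for IR-calc using $\parity_n$: modify $\parity_n$ by adding fresh universal variables that break the symmetries $\sigma_1,\dots,\sigma_n$ (e.g.\ attach a distinct universal literal to each group of clauses so that the literal-swap symmetries can no longer map the clause set to itself, while keeping the formula false), verify that Q-Res+S again degenerates to Q-Res and hence needs exponentially many steps by the known $\parity_n$ lower bound, and verify that IR-calc — being expansion-based and insensitive to this kind of padding — still has a polynomial refutation. Combining the four families gives the proposition. I expect the main obstacle to be item (1)/(2) above: rigorously proving that the padded formulas genuinely kill \emph{all} useful symmetries (one must rule out exotic admissible $\sigma$ permuting the new variables among themselves or combining with old ones) and simultaneously that the chosen stronger system really does retain a short proof after padding. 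The symmetry-killing argument hinges on the admissibility constraint that a symmetry cannot move a variable out of its quantifier block, so placing each padding variable in its own block is the key device; making this airtight, and checking the false-ness of each modified family, is where the real care is needed, though each step is routine once set up.
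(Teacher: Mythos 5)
Your overall plan is the same as the paper's: the two hard directions come from the earlier propositions ($\mathrm{KBKF}_n$ is hard for IR-calc but has linear Q-Res+S refutations, $\mathrm{QUPARITY}_n$ is hard for LQU$^+$ but easy for Q-Res+S), and the two remaining directions come from symmetry-killing modifications of these families that stay easy for LQU$^+$ resp.\ IR-calc while forcing Q-Res+S back to plain Q-Res. The paper itself only sketches this, so your additional checklist (falseness of the modified formulas, transferring the Q-Res lower bound by deleting the padding literals, verifying the strong systems keep short proofs) is exactly the right list of obligations.

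There is, however, one concrete slip in your key device. You place the fresh universal variable $b_i$ \emph{after} $y_i$ (between $x_iy_i$ and $a_i$) and then argue that ``$b_j$ lies in its own singleton quantifier block, so admissibility forbids mapping it to any other variable.'' With your placement this is false: $b_i$ merges into the universal block of $a_i$, so an admissible map may send $b_i$ to $a_i$ or $\bar a_i$; and, more importantly, $x_i$ and $y_i$ still share an existential block, so the swap $(x_i\ y_i)$ remains admissible and you are forced into a delicate clause-multiset analysis to rule out residual symmetries (your remark that $b_j$ occurs in no matching clause points in that direction, but the block argument you lean on does not apply). The paper's placement of $b_i$ \emph{between} $x_i$ and $y_i$ is what makes the argument clean: it splits the block $\exists x_iy_i$ into singleton blocks, so no admissible map can permute distinct variables in that region at all, and after checking that no polarity flip preserves the clause set (e.g.\ $b_i$ occurs only positively), the S rule becomes vacuous and Q-Res+S degenerates to Q-Res by definition. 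With that correction your proposal matches the paper's intended proof; the remaining assertions (LQU$^+$ short proofs survive the padding, and the analogous padding of $\mathrm{QUPARITY}_n$ stays easy for IR-calc) are stated, not proved, in the paper as well.
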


For the future, the effects of adding S to more powerful proof systems 
than Q-Res remain to be investigated.

\smallskip
\textbf{Acknowledgements.}
Parts of this work were supported by the Austrian Science Fund (FWF) under grant numbers
NFN S11408-N23 (RiSE), Y464-N18, and SFB F5004.

 \bibliographystyle{plain}
 \bibliography{refs}
 
\end{document}